\newcommand{\csproblem}[2]{
\begin{center}
\fbox{
	\parbox{0.95\textwidth}{
	\vspace*{0.2em}
	\begin{center}
		\begin{minipage}{0.9\textwidth}
		\begin{flushright}
			\textsc{#1}
		\end{flushright}
\centering\setlength\parskip\medskipamount   
	#2
      \end{minipage}
	\end{center}\vspace*{0.6em}
	}
}\end{center}}
\newcommand{\greedycheater}{\texttt{GreedyCheater}}
\newcommand{\nphard}{$\mathrm{NP}$-hard}
\newcommand{\nphardness}{$\mathrm{NP}$-hardness}
\newcommand{\MDS}{$\textsc{Minimum Dominating Vertex Set}$}
\newcommand{\conphard}{$\mathrm{coNP}$-hard}
\newcommand{\conphardness}{$\mathrm{coNP}$-hardness}
\newcommand{\filter}{\mathcal{F}}
\title{The Computational Complexity of Evil Hangman}
\titlerunning{The Computational Complexity of Evil Hangman}
\author{J\'er\'emy Barbay}
{Department of Computer Science, University of Chile \and \url{http://barbay.cl}}%
{jeremy@barbay.cl}%
{https://orcid.org/0000-0002-3392-8353}%
{}
\author{Bernardo Subercaseaux}
{Department of Computer Science, University of Chile \and Millennium Institute for Foundational Research on Data (IMFD), Chile}%
{bsuberca@dcc.uchile.cl}%
{https://orcid.org/0000-0003-2295-1299}
{}
\authorrunning{J. Barbay and B. Subercaseaux}
\keywords{
  combinatorial game theory,
  computational complexity,
  decidability,
  hangman.
}
\begin{document}

\maketitle

\begin{abstract} 
The game of Hangman is a classical asymmetric two player game in which one player, the setter, chooses a secret word from a language, that the other player, the guesser, tries to discover through single letter matching queries, answered by all occurrences of this letter if any. In the Evil Hangman variant, the setter can change the secret word during the game, as long as the new choice is consistent with the information already given to the guesser. We show that a greedy strategy for Evil Hangman can perform arbitrarily far from optimal, and most importantly, that playing optimally as an Evil Hangman setter is computationally difficult. The latter result holds even assuming perfect knowledge of the language, for several classes of languages, ranging from Finite to Turing Computable. The proofs are based on reductions to Dominating Set on 3-regular graphs and to the Membership problem, combinatorial problems already known to be computationally hard.
\end{abstract}

\section{Introduction}
\label{sec:intro}

\newcommand{\blank}{\bot}

The \textsc{Hangman}'s game is a classical asymmetric two player game, where one player, denoted as \textit{the setter} keeps a secret word $w$, that the other player, denoted as \textit{the guesser}, needs to guess. The game starts with both players agreeing on a maximum number of guesses $d$, and the setter communicating to the guesser an integer $k$, the length of the word $w$. Then, in every turn the guesser makes a query with a single letter $s$, and the setter reveals every appearance of $s$ in $w$, if any.
A query $s$ is said to have failed if there are no occurrences of $s$ in $w$. The game ends either when $w$ is fully revealed, in which case we say the guesser won, or when the guesser has made more than $d$ failed queries, in which case we say the setter won the game.


%
The \textsc{Evil Hangman} variant is a little twist to the game that has been widely used as a programming assignment~\cite{Parlante2011}. In this variant, the setter can change the secret word as often as she wants during the game, as long as at the end she's able to reveal a word that is consistent with all the answers given thus far to the guesser's queries. In the programming assignment~\cite{Parlante2011}, students are given the task of implementing what we call the \greedycheater, an evil setter that decides to answer each query with the heuristic of keeping the dictionary of consistent words to reveal at the end as big as possible.


A natural question is whether the algorithm {\greedycheater} is the best one can do when trying to maximize the number of guesses a guesser requires to discover the secret word, and if it is not, then \textbf{what would be the best strategy to maximize the number of questions required to guess a word in a} \textsc{Evil Hangman} \textbf{game}. Such questions can be asked in various contexts, such as when the language from which the words are chosen and guessed is a natural one (i.e. a finite set of words of fixed length $k$), or as when the language is defined more formally (e.g. through a Turing machine, the language being projected to a finite set of words of fixed length $k$).


We formalize such various contexts of the \textsc{Hangman} and \textsc{Evil Hangman} game, and the related computational problems, in Section~\ref{sec:preliminaries}.
As a preliminary result, we show that the number of guesses generated by the {\greedycheater} strategy is not only sub-optimal on some trivial examples, but that it can be arbitrarily worse than the optimal strategy on an infinite family of scenarios (in Section~\ref{sec:greedy}).
Our main result is about finding an optimal strategy for the game of \textsc{Evil Hangman}: while it is clear that there is an exponential time \textit{minimax} strategy for playing optimally as an evil setter (see Equation~\ref{eq:recursive}), we prove (in Section~\ref{sec:hardness}) that such a running time is essentially optimal in the worst case, as the problem of deciding for a given set $L$ of words and a maximal number $d$ of guesses whether there is a strategy forcing more than $d$ failed guesses, is \conphard\ (and hence, deciding if there is a winnning stategy for the guesser is \nphard).
Pushing further such results, we prove (in Section~\ref{sec:turing-recogn-dict}) that playing optimally is $\mathrm{PSPACE}$-complete when the game is played over the language defined by a Context Sensitive Grammar, and undecidable when played over the language defined by a Turing Machine.
We conclude in Section~\ref{sec:openquestions} with a discussion of other minor results and potential open directions for further research. 

Throughout this paper, we consider  the problem in the context of a perfect guesser, with perfect knowledge of the language.

\section{Preliminaries}
\label{sec:preliminaries}

Before stating our results, we describe more formally the original \textsc{Hangman} game (Section \ref{sec:hangman}), the \textsc{Evil Hangman} variant (Section~\ref{sec:evil-hangman}), and a formalization of how to measure the quality of strategies for the \textsc{Evil Hangman} variant (Section~\ref{sec:eval-textsc-hangm}).

\subsection{Hangman}\label{sec:hangman}

The game starts with a word length, $k$, and a parameter $d$ stating the number of failed guesses allowed, being agreed upon between the players.  A game of \textsc{Hangman} is played over an alphabet~$\Sigma$ initially set to $[1..\sigma]$ and a (potentially infinite) language $L$ projected to words of length $k$ on the remaining alphabet by intersecting it with $\Sigma^k$. The alphabet will be progressively reduced during the game to capture the fact that symbols which have already been guessed should not be part of the game anymore. We will use an extra symbol, not present in~$\Sigma$ that denotes a letter not yet revealed. Players often use an underscore $(\_)$ for this, but we will use the symbol $\blank$ instead, for better readability.
	
As the game goes by turns, we can define the state of the game in terms of which letters have been discarded, and what action is taken, on the $i$-th turn (turns are $1$-indexed).  The game starts with the setter revealing $M_0 = \blank^k$, which represents what the guesser knows about the word at that point. Then, on the $i$-th turn, the guesser makes a query with the symbol $s_i$, and the setter replies with the mask $M_i$, which is equal to $M_{i-1}$ except possibly for some occurrences of the $\blank$ symbol, that have been changed to $s_i$. Figure~\ref{fig:exampleTraditional} presents an example of a traditional game of \textsc{Hangman}. 

	\begin{figure}
	\begin{mdframed}[linecolor=black!30, backgroundcolor=black!3]
	\centering
	\begin{subfigure}[b]{0.5\textwidth}
         \centering
				\begin{enumerate}
					\item $w = fun$
					\item $M_0 = \blank \blank \blank$
					\item $\bm{s_1 = e}$
					\item $M_1 = \blank \blank \blank$
					\item $s_2 = n$
					\item $M_2 = \blank \blank n$
					\item \bm{$s_3 = a$}
					\item $M_3 = \blank \blank n$
					\item $s_4 = u$
					\item $M_4 = \blank un$
					\item $s_5 = f$
					\item \bm{$M_5 = fun$}
				\end{enumerate}
         \caption{Traditional \textsc{Hangman} (guesser wins). Note that $s_1$ and $s_3$ are failed guesses.}
        \label{fig:exampleTraditional}
     \end{subfigure}
     \hfill
	 \begin{subfigure}[b]{0.45\textwidth}
         \centering
				
				\begin{enumerate}
					\item $w_0 = run$
					\item $M_0 = \blank \blank \blank$
					\item $s_1 = u$
					\item $w_1 = run \implies M_1 = \blank u \blank$
					\item $s_2 = n$
					\item $w_2 = run \implies M_2 = \blank u n$
					\item \bm{$s_3 = r$}
					\item $w_3 = pun \implies M_3 = \blank u n$
					\item \bm{$s_4 = p$}
					\item $w_4 = sun \implies M_4 = \blank u n$
					\item \bm{$s_5 = s$}
					\item $w_5 = fun \implies M_5 = \blank u n$
				\end{enumerate}
         \caption{\textsc{Evil Hangman} (setter wins). Note that $s_3$, $s_4$ and $s_5$ are failed guesses.}
        \label{fig:exampleEvil}
     \end{subfigure}

	\caption{Example games of \textsc{Hangman} ($k=3, d=3$) over the Latin lowercase alphabet and using English as a language.
}
	\label{fig:example}
	\end{mdframed}
	\end{figure}

We define as well the operations $a \oplus b$ to be the result of replacing every $\blank$ character in $a$ by its corresponding character in $b$, and $(a \cap b)_i$ to be $a_i$ if  $a_i = b_i$, and $\blank$ otherwise.
Now we state that given a secret word $w$, we can compute $M_i$ after a guess $s_i$ as 
$
	M_{i+1} = M_i \oplus (w \cap s_i^k).
$

It is also helpful to define  $B_i$ as the indices in $M_i$
that have the symbol $\blank$, as this is the set that the setter can choose a subset from when answering a query. We use as well the notation $s^B$, with $s$ being a symbol and $B$ a subset of $[1..k]$, where $k$ is implicit, to describe the word $w$ of size $k$ such that $w_i = s$ if $i \in B$ and $w_i = \bot$ otherwise. Finally, for a language $L$ and a mask $M$, we abbreviate the set $\{ w \in L \mid M \preceq w \}$ as $\filter(L, M)$.

\subsection{Evil Hangman}\label{sec:evil-hangman}

In the evil version of the game, the setter can choose to change the secret word, even every turn, as long as the new choice is consistent with the answers given so far. We say that $w_i$ is the secret word before the setter reveals $M_i$. Figure~\ref{fig:exampleEvil} presents an example of a an \textsc{Evil Hangman} game.

 In order to define what the required \textit{consistency} exactly means, we define a relation $\preceq$, such that for two words $a$ and $b$ of length $k$

$$
	a \preceq b \iff (a_i = \blank) \lor (a_i = b_i), \quad \text{for all } i \in \{1, \ldots, k\} 
$$

Intuitively, we say that $b$ is consistent with $a$, as the only differences they can have are positions where $a$ had not been revealed yet. We can now state our consistency restrictions as follows:

	\begin{enumerate}[\hspace{.4\linewidth}1.]
		\item $w_i \in L$ 
		\vspace{0.8em}
		\item $M_i \preceq w_i$ 
		\vspace{0.8em}
		\item $w_{i_j} \not\in \{s_1, ..., s_{i-1} \},  \forall j \in B_i$	
	\end{enumerate}

The first rule requires the partial secret words to be part of the language, the second one requires that partial secret words do not differ with the exposed mask, and the third one requires that the symbols not yet revealed to the guesser do not match any of the previous guesses.
The latter is simply captured by the (dynamic) alphabet~$\Sigma$, which contains only the remaining possible guesses.

\subsection{Evaluation of \textsc{Evil Hangman} strategies} \label{sec:eval-textsc-hangm}

An evil setter strategy, over a language $L$, is a function $A_L$ that takes a mask $M_i$, a guess $s_i$ and returns $A_L(M_i, s_i) = M_{i+1}$ a new mask, such that $M_i \preceq M_{i+1}$, and there exists $w \in L$ such that $M_{i+1} \preceq w$.
We define the function $W$ to measure how good a particular situation is for the setter. 
A situation is a tuple $(M, \Sigma)$ where $M$ is a word mask and~$\Sigma$ is the remaining alphabet (letters that have not been guessed so far).
We define $W(M, \Sigma, A_L)$ for an adversarial setter $A_L$ as the minimum number of failed guesses that any player would have to do in order to reveal a full word, starting from $M$ over the alphabet~$\Sigma$.  We can formalize the function $W$ inductively as follows, where we use the Iverson bracket notation~\cite{IversonBracketNotationWikipedia}, namely, for any predicate $P$, the expression $\llbracket P \rrbracket$ equals $1$ if $P$ is true and $0$ otherwise:

\begin{equation}
	W(M, \Sigma, A_L) = \begin{cases}
		0 & \text{if } M \in L \\
		\min\limits_{s \in \Sigma} \Big\{ W\left(
                  \begin{array}{c}                    
                    A_L(M, s), \\
                    \Sigma \setminus \{ s\}, \\
                    A_{\filter(L, A_L(M, s))}
                  \end{array}
                \right) +  \llbracket A_L(M, s) = M \rrbracket  \Big\}  & \text{otherwise}
	\end{cases}
	\label{eq:inductive}
\end{equation}

We define $OPT$ as an adversary such that $W(M, \Sigma, OPT)$ is maximum for every $M$ and~$\Sigma$.
Noting $B_M$  the set of indices of a word $M$ containing $\bot$ allows us to state $OPT$ explicitly as a recursive formula:


\begin{equation}
	OPT_L(M, s, \Sigma) = \max_{
	\substack{B \subseteq B_M \\ 
	{ \filter(L, M \oplus s^B) \neq \varnothing}}} 
	\left\{ 
		\llbracket B = \emptyset \rrbracket + \min_{s' \in \Sigma}  \left\{ OPT_{\filter(L, M \oplus s^B)} \Big( M \oplus s^B, s', \Sigma \setminus \{ s \} \Big)  \right\}
 	\right\}
	\label{eq:recursive}
\end{equation}

We will refer to this optimal adversary in the next section, when showing that the greedy algorithm's competitive ratio is not bounded by a constant, and thus that such an algorithm can perform arbitrarily bad on an infinite family of instances.
\section{The greedy cheater}
\label{sec:greedy}

Algorithm~\ref{alg:greedy} presents a pseudo code for the algorithm \texttt{GreedyCheater}, a pretty intuitive and efficient algorithm for the setter, that is often given as a programming assignment~\cite{Parlante2011} in colleges across the US. The idea is to answer every query in such a way that the number of words remaining in the dictionary that are consistent with the answer is maximized.

\begin{algorithm}
\caption{Pseudo code for the {\greedycheater} algorithm}\label{alg:greedy}
\DontPrintSemicolon
 \KwInput{a mask $M_{i-1}$, a guess $s_i$ and a dictionary $D$}
 \KwOutput{the mask $M_i$}
 
 $bestMaskSize \leftarrow 0$ \;
 $bestMask \leftarrow NIL$ \;
 \For{$B \in 2^{B_{i-1}}$} {
 	$thisMask \leftarrow M_{i-1} \oplus s_i^B$ \;
 	$thisMaskSize \leftarrow | \filter(D, thisMask)|$ \;
 	\If{$thisMaskSize > bestMaskSize$} {
 		$bestMaskSize \leftarrow thisMaskSize$ \;
 		$bestMask \leftarrow thisMask$
 	}
 }
 \Return{$bestMask$}
\end{algorithm}

Not only is the {\greedycheater} algorithm not optimal, it can be arbitrarily bad, which we formalize in the following theorem, and illustrate with an example in Figure~\ref{fig:adverDict}.
\begin{theorem}\label{thm:greedycheaterIsNotCCompetitive}
%
\greedycheater\ is not $c$-competitive  in terms of $W$.
That is, there are no constants $c > 0$ and $b$ such that $c \cdot W(M, \Sigma, \greedycheater) + b \geq W(M, \Sigma, OPT_L)$ for every possible language $L$ and situation $(M, \Sigma)$.
\end{theorem}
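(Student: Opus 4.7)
The plan is to construct, for each $n \geq 1$, a language $L_n$ and an initial situation $(\blank^{k_n}, \Sigma_n)$ for which $\greedycheater$ forces at most $O(1)$ failed queries while $OPT_{L_n}$ forces at least $n$. Since such a multiplicative gap grows without bound, no constants $c > 0$ and $b$ can satisfy $c \cdot W(\blank^{k_n}, \Sigma_n, \greedycheater) + b \geq W(\blank^{k_n}, \Sigma_n, OPT_{L_n})$ uniformly in $n$.

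The language splits as $L_n = T_n \cup H_n$, where $T_n$ is a ``wide'' trap and $H_n$ is a ``narrow'' but rigid instance. Take $\Sigma_n = \{a, 0, 1, c_1, \ldots, c_n\}$, $k_n = \lceil \log_2(n+1) \rceil + 1$, and set
\[ T_n = \{a w : w \in \{0, 1\}^{k_n - 1}\}, \qquad H_n = \{c_i^{k_n} : 1 \leq i \leq n\}. \]
By construction $|T_n| = 2^{k_n - 1} > n = |H_n|$, and the letters used in $T_n$ are disjoint from those used in $H_n$.

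For the upper bound on \greedycheater, suppose the guesser opens with the letter $a$. The only responses with nonempty filter are the mask $a \blank^{k_n - 1}$ (whose filter is exactly $T_n$, of size $2^{k_n - 1}$) and the mask $\blank^{k_n}$ (whose effective filter is $H_n$, of size $n$); greedy therefore reveals position $1$, not failing the query and committing to $T_n$. From the resulting state, any subsequent query of $0$ or $1$ leaves at most one consistent word in $T_n$, so the game ends within one additional failed query. Hence $W(\blank^{k_n}, \Sigma_n, \greedycheater) \leq 1$.

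For the lower bound on $OPT_{L_n}$, the optimal setter responds to the same opening $a$ with $B = \emptyset$, failing the query and committing to $H_n$. The subgame on $H_n$ is the canonical ``$n$ disjoint single-letter words'' instance, in which a short induction forces $n - 1$ further failed queries, for a total of $n$. The main technical obstacle is showing that the guesser cannot evade this bound by opening with a different letter: for any alternative opening, the setter can again answer $B = \emptyset$, preserving $H_n$ entirely (if the opening is $0$ or $1$) or up to a single element (if it is some $c_i$), and the same inductive argument then forces at least $n$ failed queries. Combining the two bounds gives a ratio of at least $n$, which is unbounded, so \greedycheater\ is not $c$-competitive for any constant $c$.
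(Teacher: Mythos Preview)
Your proof is correct and follows essentially the same approach as the paper. Both constructions partition the dictionary into a ``trap'' $T$ (many words sharing a distinguished first letter, so that \greedycheater\ commits to it) and a ``hard'' set $H$ of words of the form $\eta^k$ over pairwise distinct symbols, on which $OPT$ forces linearly many failures. The paper uses $m+1$ trap words over two symbols $\beta,\gamma$ with the additional requirement that both symbols occur in every trap word, which yields $W(\cdot,\cdot,\greedycheater)=0$ exactly; your trap uses all of $\{0,1\}^{k_n-1}$, so you only get $W(\cdot,\cdot,\greedycheater)\le 1$, since greedy may (by tie-breaking) select the all-$1$s word when the guesser queries $0$. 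Either bound suffices for non-$c$-competitiveness. On the $OPT$ side you are in fact slightly more careful than the paper in addressing guesser openings other than the distinguished letter; the inductive sketch you give is correct and amounts to the hitting-set observation that any set of symbols covering all of $T_n\cup H_n$ must contain every $c_i$ together with $a$ or both of $0,1$, hence has size at least $n+1$, so the ``always fail'' setter forces at least $n$ failed guesses.
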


\begin{figure}[t]
	\begin{mdframed}[linecolor=black!30,backgroundcolor=black!3]
	\centering

		\begin{tikzpicture}
		\node[] (D) at (-7, 0) 
		{$
			D \left\{
			\begin{array}{c}
				\texttt{abbc}\\
				\texttt{abcb}\\
				\texttt{abcc}\\
				\texttt{dddd}\\
				\texttt{eeee}
			\end{array}
			\right.
		$};
		
		\node[text width=1.7cm] (gasks) at (-4, 0) {
			Guesser asks for '\texttt{a}'
		};
		
		\node[] (a) at (1, 2)
		{$
			\left.
			\begin{array}{c}
				\texttt{abbc}\\
				\texttt{abcb}\\
				\texttt{abcc}\\
			\end{array}
			\right\}		
		$};
		
		\node[] (b) at (1, -2)
		{$
			\left.
			\begin{array}{c}
				\texttt{dddd}\\
				\texttt{eeee}\\
			\end{array}
			\right\}
		$};
		
		\node[text width=2cm] (texta) at (3,2) {
			The guesser can win with no loss
		};
		
		\node[text width=2cm] (textb) at (3,-2) {
			A failed guess can be forced
		};
		
		\path [->] (D) edge node {} (gasks);
		\path [->, sloped] (gasks) edge node [above] {\greedycheater} node [below] {\texttt{a} $\bot \bot \bot$} (a);
		\path [->, sloped, text width =1.8cm] (gasks) edge node [below] {OPT, takes 1 life} node[above] {$\bot \bot \bot \bot$} (b);
		\end{tikzpicture}
	\caption{Example of an adversarial dictionary $D$  of size $n = 5$ over the alphabet $\{a,b,c,d,e\}$  against the \greedycheater\ algorithm, with word length $k = 4$.}
	\label{fig:adverDict}
	\end{mdframed}
\end{figure}

\begin{proof}
We describe how to build an adversarial dictionary $D$ of size $n = 2m+1$, where
\begin{itemize}
\item $m+1$ words start with the symbol $\alpha$, and have only symbols $\beta$ and $\gamma$ in the other positions (both $\beta$ and $\gamma$ must be present) and
\item the remaining $m$ words are of the form $\eta^k$ for symbols $\eta \not\in \{\alpha, \beta, \gamma \}$.
\end{itemize}

Note that this requires $k \geq 1 + \lg(m+1)$ and $\sigma \geq m + 3$, to ensure that we can actually build the first $m+1$ different words with combinations of $\beta$ and $\gamma$ and the last $m$ words with different symbols. This poses no problem, as such $k$ and $\sigma$ exist for every $m$, and thus we can build bad instances of arbitrary size.

Now, upon the guess $\alpha$, the greedy algorithm will answer $\alpha \blank^{k-1}$ (as $m+1$ words start with it, as opposed to the $m$ that do not have it), and then  after guessing $\beta$ and $\gamma$ the guesser will find the word with loss $0$.

On the other hand, an optimal \textsc{Evil Hangman} algorithm would reply with $\blank^k$ to the original guess, and then on any guess with a symbol in $\Sigma \setminus \{\alpha, \beta, \gamma\}$ except for the last one, it would reply with the same mask taking a life from the guesser. Such a strategy makes the guesser lose on the symbol $\alpha$ and at least $m-1$ other symbols, giving a total loss of $m$. This concludes the proof.
\end{proof}


The {\greedycheater} algorithm  can be efficiently implemented. But given that it can perform arbitrarily far from the optimum, a natural question is whether there is an efficient algorithm that achieves optimality, which we explore in the next section.

\section{Hardness of Finding an Optimal Evil Adversary}
\label{sec:hardness}

Consider the decision problem of whether the setter can \textit{win} the game against any possible guesser. We restrict our analysis to finite languages in this section, and explore generalized languages (languages that are higher up in the Chomsky hierarchy) in Section \ref{sec:turing-recogn-dict}.

Our main computational problem is, given a finite language $L$ of words of length $k$ on an alphabet of size $\sigma$, to decide if an evil Hangman setter has a winning strategy, where winning is defined with respect to the number $d$ of failed guesses that the guesser is allowed:
\csproblem{Evil Hangman}{Given a finite language $L$, where every word has some fixed size $k$, and an integer $d$, decide whether it is possible for a cheating setter to play in such a way that no guesser can get the secret word without making at least $d$ failed guesses.}

We prove the difficulty of \textsc{Evil Hangman} through a reduction to the problem of \MDS\ in $3$-regular graphs, by taking such a graph and encoding it as a language.
Intuitively, we will build an alphabet by associating a different symbol to each vertex of the graph, and the language will be constructed by associating a word to each vertex. Each symbol will be present in the word of its corresponding vertex and neighbors. Therefore, each time the setter answers a guesser's query negatively (which corresponds to a vertex in the graph), the words associated to its associated vertex and its neighborhood are discarded from the possibles words to reveal at the end, and we can interpret this as discarding that vertex and its neighborhood from the graph. As long as there is a non-discarded vertex in the graph, the setter can claim that the encoding of such a vertex is the secret word. This relates the number of nodes, and their respective neighborhoods, that are enough to cover the entire graph (\MDS) and the amount of failed guesses a setter can force (\textsc{Evil Hangman}). The relationship between the two problems is however inverted, the existence of a small dominating set will allow the guesser to quickly discard many options and thus will constrain the victory of our protagonist, the setter. This idea leads therefore to a proof of \conphardness.

\begin{theorem}\label{thm:EvilHangmanIsNPHard}
\textsc{Evil Hangman} is \conphard, even when restricted to languages with words of length 4.
\end{theorem}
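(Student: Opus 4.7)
The plan is to reduce from \textsc{Minimum Dominating Set} on $3$-regular graphs, a classical NP-hard problem, to the complement of \textsc{Evil Hangman}. Given such a graph $G = (V, E)$ with $n = |V|$ vertices and a target size $k$, I construct a language $L$ over an alphabet $\Sigma$ of size $n$, assigning one symbol to each vertex and including, for each $v \in V$, one word $w_v$ of length $4$ whose letters are the symbols of the closed neighborhood $N[v] = \{v\} \cup N(v)$, placed in some fixed canonical order (e.g.\ sorted by vertex label). Since $G$ is $3$-regular, $|N[v]| = 4$, matching the prescribed word length.

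The key invariant is that a failed guess on symbol $s$ eliminates from the set of surviving words exactly $\{w_v : v \in N[s]\}$: the word $w_v$ contains the symbol $s$ if and only if $s \in N[v]$, i.e.\ $v \in N[s]$. Hence the setter can answer ``no'' to a sequence of queries $s_1, \dots, s_t$ while keeping the live dictionary non-empty if and only if $\{s_1, \dots, s_t\}$ is \emph{not} a dominating set of $G$. As soon as the queried symbols form a dominating set, the live dictionary becomes empty and the setter is forced to answer ``yes.'' In particular, by always answering ``no'' while possible, the setter can force at least $\gamma(G) - 1$ failed guesses, yielding one direction of the equivalence at once.

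For the other direction, I would design a guesser strategy that queries the vertices of a minimum dominating set of $G$ adaptively, guaranteeing that within $\gamma(G)$ queries the setter must issue at least one ``yes.'' At that point, the canonical placement of letters inside each $w_v$ together with the guesser's perfect knowledge of the graph lets the guesser identify (or tightly narrow down) the committed word and finish the game with a predictable, bounded number of further failed guesses. Setting the threshold $d$ as an appropriate function of $k$ (naturally $d = k - 1$, possibly shifted by a small additive constant absorbing the residual game) then makes ``setter forces $> d$ failures'' equivalent to ``$\gamma(G) > k$,'' producing the required \conphardness.

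The hard part will be analysing the \emph{residual game} after the first forced ``yes'': even with the dictionary shrunken, a clever setter may still extract extra failed guesses on subsequent queries by exploiting the choice of positions inside the mask. I expect the argument to require a careful choice of the canonical ordering inside each $w_v$, and possibly an inductive structural bound on failures in any such subgame, in order to turn the correspondence with $\gamma(G)$ into an exact, polynomial-time computable equivalence rather than a loose inequality.
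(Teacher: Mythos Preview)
Your overall reduction is the same as the paper's: encode a $3$-regular graph $G$ as a language of $4$-letter words, one per vertex, whose letters are the closed neighbourhood, so that rejecting a symbol $s$ kills exactly the words indexed by $N[s]$. Both directions you sketch are also the paper's. The place where your proposal stops short is precisely the one you flag as ``the hard part'': the residual game after the first forced ``yes''. With your sorted canonical ordering this residual game is \emph{not} trivial. Two distinct vertices can easily have the same symbol in the same position of their sorted closed neighbourhoods (e.g.\ if $u$ and $v$ share their two smallest-labelled neighbours), so a single revealed letter need not pin down the word, and the setter may harvest further failed guesses. Bounding those extra failures uniformly, independently of the graph, is not obvious, and without such a bound you cannot set the threshold $d$ as a clean function of the dominating-set target.

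The paper eliminates the residual game entirely by replacing ``sorted order'' with a \emph{proper encoding}: an ordering of the three neighbours in each word such that every symbol occurs in every position in \emph{exactly one} word. With this property, the moment the setter reveals any letter in any position, the word is uniquely determined and the guesser finishes with zero further failures; the threshold is then simply $d$ itself (that is, $\gamma(G)\le d$ iff the setter cannot force $d$ failed guesses), with no additive shift. The existence and polynomial-time computability of a proper encoding is the paper's key technical lemma: view $G$ as a bidirected graph, pass to its bipartite double cover (a $3$-regular bipartite graph), and peel off three perfect matchings using Hall's theorem; the colour of the arc $u\to v$ then dictates the position of $v$ in $w_u$. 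This is exactly the ``careful choice of the canonical ordering'' you anticipate needing, and it is the one missing ingredient in your proposal.
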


\begin{proof}
Given a graph $G = (V,E)$ and a positive integer $d$, let $(G, d)$ be an instance of the \textsc{3-Regular Dominating Set} problem (defined formally below)
\begin{itemize}
\item Theorem~\ref{thm:inddom} proves that \textsc{3-Regular Dominating Set} is \nphard.
\item Lemma~\ref{lemma:encoding} describes how to compute a language $L$ that properly encodes $G$ in polynomial time. The alphabet size $\sigma$ of $L$ corresponds to the number of labels in $V$.
\item Lemma~\ref{lemma:Pos3DomImpliesPosEHangman} proves that if $(G,d)$ is a positive instance of \textsc{3-Regular Dominating Set}, then $(L, d)$ is a negative instance of \textsc{Evil Hangman}; while
\item Lemma~\ref{lemma:PosEHangmanImpliesPos3Dom} proves (by contradiction) that if $(L, d)$ is a positive instance of \textsc{Evil Hangman}, then $(G,d)$ is a negative instance of \textsc{3-Regular Dominating Set}.
\end{itemize}
The combination of Lemmas~\ref{lemma:Pos3DomImpliesPosEHangman} and~\ref{lemma:PosEHangmanImpliesPos3Dom} proves that $(G,d)$ is a positive instance of \textsc{3-Regular Dominating Set} if and only if $(L, d)$ is a negative instance of \textsc{Evil Hangman}, which permits to deduce the \conphardness~of \textsc{Evil Hangman} from the \nphardness~of \textsc{3-Regular Dominating Set} (from Theorem~\ref{thm:inddom})
\end{proof}

In order to ensure that the language only contains words of a fixed length, we consider only the restricted class of graphs where the degree of every node is fixed:

\begin{definition}[k-Regular Graph]
A graph $G=(V,E)$ is  \emph{$k$-regular} if and only if every vertex $v$ in $V$ has exactly $k$ neighbors.
\end{definition}

The main combinatorial problem required for our results is that of 

\begin{definition}[Dominating Vertex Set]
Given a graph $G = (V, E)$, a set of vertices $D \subseteq V$ is \emph{dominating} if and only if every node in $V$ is either a member of $D$ or has a neighbor in $D$.
\end{definition}

The problem we will reduce from is a restricted version of \MDS.

\csproblem{3-Regular Dominating Set}{ Given a $3$-regular graph $G$ and an integer $d$, decide whether the size $\gamma(G)$ of the minimum dominating set is at most $d$.}

The problem of \MDS\ has been intensively studied since the 1970s, and its \nphardness~is known for several classes of graphs (Planar, Perfect, Bipartite, Chordal, Split, etc) \cite{Corneil1991}. Our reduction is based on a result by Kikuno et al.~\cite{kikuno1980np}, that shows $\mathrm{NP}$-completeness for $3$-regular planar graphs. This stronger result implies of course hardness for the broader class of $3$-regular graphs, which is essential to our reduction. 

\begin{theorem}[Kikuno et al.~\cite{kikuno1980np}]\label{thm:inddom} 
	\textsc{3-Regular Dominating Set} is \nphard
\end{theorem}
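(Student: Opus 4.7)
The plan is to reduce from a known $\mathrm{NP}$-hard variant of Dominating Set. A natural starting point is unrestricted \textsc{Minimum Dominating Vertex Set}, whose \nphardness{} follows from a classical reduction from \textsc{3-SAT} or from \textsc{Vertex Cover}. The goal is then to transform an arbitrary graph $G$ together with a budget $d$ into a $3$-regular graph $G'$ with an adjusted budget $d'$ such that $\gamma(G) \leq d$ if and only if $\gamma(G') \leq d'$.

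First, I would design two gadgets, one for raising the degree of an under-degree vertex and one for lowering the degree of an over-degree vertex, both of which behave neutrally with respect to the dominating-set structure. For each vertex $v$ of $G$ whose degree is less than $3$, I would attach a small "pad" subgraph with a single port at $v$, having the properties that (i) the port has degree $1$ inside the pad, so that after attachment $v$ gains the right number of neighbors; (ii) the pad can be dominated by a fixed constant number of its internal vertices, independently of whether $v$ lies in the dominating set; and (iii) every other vertex of the pad has degree exactly $3$. Vertices of degree larger than $3$ would be split into a chain of degree-$3$ copies linked by a "propagation" gadget that forwards domination along the chain. The adjusted budget $d'$ is then $d$ plus the fixed contribution of all gadgets.

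The correctness argument has two directions. The easy direction takes a dominating set $D$ of $G$ of size $\leq d$, extends it by the canonical internal dominators of each gadget, and verifies that the resulting set dominates $G'$ and has size $\leq d'$. The harder direction starts from a dominating set $D'$ of $G'$ of size $\leq d'$ and must show that $D' \cap V(G)$, possibly after a local exchange argument, dominates $G$ and uses at most $d$ vertices. This requires proving a "charging lemma" that each gadget contributes at least its fixed number of vertices to any dominating set, so that at most $d$ vertices of $D'$ can sit in $V(G)$, and that these vertices must dominate every $v \in V(G)$ (otherwise $v$ would have to be dominated from inside an attached gadget, which the gadget's structure is designed to forbid without paying extra).

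The main obstacle is gadget design: producing small pads that simultaneously enforce exact $3$-regularity, are rigid enough that their intersection with any optimal dominating set has a single well-understood form, and keep the offset $d' - d$ exact rather than merely an inequality. For the stronger statement of Kikuno et al., where the output must also be planar, the gadgets must additionally be planar with the port on the outer face, and the reduction must start from a planar $\mathrm{NP}$-hard source problem (such as \textsc{Planar Vertex Cover} or \textsc{Planar 3-SAT}) so that the global embedding is preserved; however, for the weaker $3$-regular statement used in Theorem~\ref{thm:EvilHangmanIsNPHard}, planarity can be dropped and the gadget design is correspondingly simpler.
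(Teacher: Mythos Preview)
The paper does not prove this theorem at all: it is stated with attribution to Kikuno et al.\ and used as a black box in the reduction to \textsc{Evil Hangman}. There is therefore no ``paper's own proof'' to compare against; your proposal is an attempt to supply an argument where the paper simply cites the literature.

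As for the sketch itself, the padding of low-degree vertices is standard and can be made to work, but the plan for high-degree vertices is where the real difficulty hides. Replacing a vertex $v$ of degree $k>3$ by a chain of degree-$3$ copies does not interact cleanly with domination: if $v$ was in the dominating set of $G$, a single copy $v_i$ in $G'$ dominates only its one or two external neighbours, so you would need many copies in $D'$ and the offset $d'-d$ becomes a function of the whole degree sequence; conversely, if $v$ was merely dominated by a neighbour $u$, then $u$ dominates only one $v_i$ and the rest of the chain must be dominated internally, again with a cost depending on $k$. Getting a clean two-sided charging lemma out of this is substantially harder than your outline suggests, and is not how the result is normally obtained.

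The route actually taken by Kikuno et al.\ (and the standard one) sidesteps this entirely: start from a source problem that is already degree-bounded, namely \textsc{Vertex Cover} on planar graphs of maximum degree $3$ (Garey--Johnson), and reduce to \textsc{Dominating Set} via the classical edge-gadget construction, then pad up to exact $3$-regularity. That way only the ``raise degree'' gadget is needed, which is the easy one, and planarity comes for free. If you want to carry out a self-contained proof, I would recommend switching to that starting point rather than attempting to split arbitrarily high-degree vertices.
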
 
	
In order to reduce from this problem, we start by showing a proper way to encode a $3$-regular graph as a language. 

\begin{definition}[Vertex Encoding]
Let $G = (V,E)$ be a $3$-regular graph, where every vertex of $V$ is labeled with a symbol from an alphabet~$\Sigma$. We say that a word of $\Sigma^4$ is a \emph{vertex encoding} of a node $v \in V$ if its first symbol is the label of $v$ followed by the labels of its three neighbors.
\end{definition}

\begin{figure}
\center
	\begin{subfigure}[b]{0.3\textwidth}
		\centering
		\begin{tikzpicture}
			\node[draw, circle, fill=black] (c) at (0,0) {};
			\node[] (lc) at (-0.3, -0.3){$c$};
			
			\node[draw, circle, fill=black] (d) at (2,0){};
			\node[] (ld) at (2.3, -0.3){$d$};
			
			\node[draw, circle, fill=black] (a) at (0,2){};
			\node[] (la) at (-0.3, 2.3){$a$};

			\node[draw, circle, fill=black] (b) at (2,2){};
			\node[] (lb) at (2.3, 2.3){$b$};
			
			\path [-,line width=0.6pt] (a) edge node {} (b);
			\path [-,line width=0.6pt] (a) edge node {} (c);
			\path [-,line width=0.6pt] (a) edge node {} (d);
			\path [-,line width=0.6pt] (b) edge node {} (c);
			\path [-,line width=0.6pt] (b) edge node {} (d);
			\path [-,line width=0.6pt] (c) edge node {} (d);
		\end{tikzpicture}
		\caption{The graph $G$ to encode}
		\label{fig:encodingA}
	\end{subfigure}
	\begin{subfigure}[b]{0.25\textwidth}
		\centering
		\begin{enumerate}
			\item $w_a = \texttt{abcd}$
			\item $w_b = \texttt{bacd}$
			\item $w_c = \texttt{cabd}$
			\item $w_d = \texttt{dabc}$
		\end{enumerate}
		\vspace{1.5em}
		\caption{An encoding of $G$ }
		\label{fig:encodingB}
	\end{subfigure}
	\begin{subfigure}[b]{0.3\textwidth}
		\centering
		\begin{enumerate}
			\item $w_a = \texttt{abcd}$
			\item $w_b = \texttt{badc}$
			\item $w_c = \texttt{cdba}$
			\item $w_d = \texttt{dcab}$
		\end{enumerate}
		\vspace{1.5em}
		\caption{A proper encoding of $G$}
		\label{fig:encodingC}
	\end{subfigure}
\caption{Example of encodings for $K_4$}
\label{fig:encoding}
\end{figure}

Now, by putting together an encoding of every vertex of a graph, we get a graph encoding as a language.

\begin{definition}[Language Encoding a Graph]
Given a graph $G = (V,E)$ whose vertices are labeled with symbols of an alphabet~$\Sigma$, we say a language $L \subseteq \Sigma^4$ \emph{encodes} $G$ if $L = \{w_1, w_2, \ldots, w_{|V|} \}$ where $w_i$ is a vertex encoding of the $i$-th node.
\end{definition}

An example of such an encoding is presented in Figure~\ref{fig:encodingB}. 
Because vertex encodings can have the neighbors of the represented vertex in any order, there are $(3!)^{|V|}$ possible language encodings for a given $3$-regular graph $G=(V,E)$, and they present different combinatorial properties.
We describe a deterministic way to encode input graphs that permits to identify any word just by knowing of a single letter in it, so that no two words can have the same symbol on the same position. This property greatly simplifies the proof of the reduction in Lemma~\ref{lemma:PosEHangmanImpliesPos3Dom}.

\begin{definition}[Proper Graph Encoding]
	An encoding $L$ of a graph $G = (V, E)$ is said to be \emph{proper} if for every vertex $v \in V$, and every position $p$ in $\{1,2,3,4\}$, there is exactly one word in which the label of vertex $v$ appears in the $p^{th}$ position.
\end{definition}

In Figure~\ref{fig:encodingC} we present an example of a proper encoding. We now prove a key lemma in our reduction: the fact that we can compute a proper encoding of any $3$-regular graph in time polynomial  in the number of vertices of $G$.

\begin{lemma}\label{lemma:encoding}
	Every $3$-regular graph $G$ admits a proper encoding, and such an encoding can be computed in polynomial time.
\end{lemma}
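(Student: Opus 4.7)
The plan is to recast the existence of a proper encoding as a 3-edge-coloring question on a 3-regular bipartite graph, for which classical results provide a polynomial-time solution. I would first observe that position~1 is automatic: $w_v$ is the unique word whose first symbol is the label of $v$, so the properness condition is trivially satisfied for $p=1$. It remains to arrange, for every vertex $u$, the labels of its three neighbors in positions 2, 3, 4 of $w_u$ so that, across the entire language $L$, every vertex $v$ also appears exactly once at each of positions 2, 3, 4.

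Equivalently, for every ordered pair $(u,v)$ with $\{u,v\} \in E$, I want to pick $\pi(u,v) \in \{2,3,4\}$ such that for every vertex $u$ one has $\{\pi(u,v) : v \text{ adjacent to } u\} = \{2,3,4\}$ (so $w_u$ is a valid vertex encoding), and for every vertex $v$ one has $\{\pi(u,v) : u \text{ adjacent to } v\} = \{2,3,4\}$ (so the encoding is proper). I would then build a bipartite graph $B$ with two copies $V^{(1)}$ and $V^{(2)}$ of $V$, adding the edges $\{u^{(1)}, v^{(2)}\}$ and $\{v^{(1)}, u^{(2)}\}$ for every $\{u,v\} \in E$. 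Since $G$ is simple and 3-regular, $B$ is a simple 3-regular bipartite graph. Defining $\pi(u,v)$ to be the color assigned to $\{u^{(1)}, v^{(2)}\}$ turns the two constraints above into exactly the requirement that $B$ admit a proper edge coloring with colors $\{2,3,4\}$.

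To conclude, I would invoke K\"onig's edge-coloring theorem, which guarantees that any bipartite graph with maximum degree $\Delta$ admits a proper $\Delta$-edge-coloring; with $\Delta = 3$ this produces the desired coloring. Algorithmically, one can repeatedly extract a perfect matching from the current (initially 3-regular, then 2-regular, then 1-regular) bipartite graph in polynomial time via any standard bipartite matching procedure such as Hopcroft--Karp, assigning a distinct color in $\{2,3,4\}$ to each extracted matching. Reading the colors back through $\pi$ then yields a proper encoding of $G$, computed in time polynomial in $|V|$.

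The main obstacle is essentially the bookkeeping of the translation step: one must carefully check that proper 3-edge-colorings of $B$ correspond bijectively to proper encodings of $G$, being mindful that each edge of $G$ contributes two edges to $B$ (one per direction) and that simplicity of $B$ relies on $G$ being simple. Once this correspondence is established, existence and polynomial-time computability follow directly from classical bipartite edge-coloring results, with no additional combinatorial difficulty.
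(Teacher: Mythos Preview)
Your proposal is correct and follows essentially the same route as the paper: both build the bipartite double cover $B$ of $G$ (your $V^{(1)},V^{(2)}$ correspond to the paper's $(\cdot)^{-},(\cdot)^{+}$), observe that a proper $3$-edge-coloring of $B$ translates into a proper encoding, and obtain that coloring by peeling off three perfect matchings in polynomial time. The only cosmetic differences are that the paper routes the argument through an intermediate digraph $D$ before passing to $B$ and cites Hall's theorem rather than K\"onig's, whereas you go directly to $B$ and name K\"onig; the underlying construction and algorithm are identical.
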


\begin{figure}
	\begin{subfigure}{0.3\textwidth}
		\centering
		\begin{tikzpicture}
			\node[draw, circle, fill=black] (c) at (0,0) {};
			\node[] (lc) at (-0.3, -0.3){$c$};
			
			\node[draw, circle, fill=black] (d) at (2,0){};
			\node[] (ld) at (2.3, -0.3){$d$};
			
			\node[draw, circle, fill=black] (a) at (0,2){};
			\node[] (la) at (-0.3, 2.3){$a$};

			\node[draw, circle, fill=black] (b) at (2,2){};
			\node[] (lb) at (2.3, 2.3){$b$};
			
			\path [-,line width=0.6pt] (a) edge node {} (b);
			\path [-,line width=0.6pt] (a) edge node {} (c);
			\path [-,line width=0.6pt] (a) edge node {} (d);
			\path [-,line width=0.6pt] (b) edge node {} (c);
			\path [-,line width=0.6pt] (b) edge node {} (d);
			\path [-,line width=0.6pt] (c) edge node {} (d);

		\end{tikzpicture}
		\caption{The graph $G$ to encode}
	\end{subfigure}
	\begin{subfigure}{0.3\textwidth}
		\centering
		\begin{tikzpicture}[>=stealth']
			\node[draw, circle, fill=black] (c) at (0,0) {};
			\node[] (lc) at (-0.3, -0.3){$c$};
			
			\node[draw, circle, fill=black] (d) at (2,0){};
			\node[] (ld) at (2.3, -0.3){$d$};
			
			\node[draw, circle, fill=black] (a) at (0,2){};
			\node[] (la) at (-0.3, 2.3){$a$};

			\node[draw, circle, fill=black] (b) at (2,2){};
			\node[] (lb) at (2.3, 2.3){$b$};
			
			\path [->, line width=0.6pt] (a) edge[bend right=10] node {} (b);
			\path [<-, line width=0.6pt] (a) edge[bend left=10] node {} (b);
			\path [->, line width=0.6pt] (a) edge[bend right=10] node {} (c);
			\path [<-, line width=0.6pt] (a) edge[bend left=10] node {} (c);
			\path [->, line width=0.6pt] (a) edge[bend right=10] node {} (d);
			\path [<-,line width=0.6pt] (a) edge[bend left=10] node {} (d);
			\path [->,line width=0.6pt] (b) edge[bend right=10] node {} (c);
			\path [<-,line width=0.6pt] (b) edge[bend left=10] node {} (c);
			\path [->,line width=0.6pt] (b) edge[bend right=10] node {} (d);
			\path [<-,line width=0.6pt] (b) edge[bend left=10] node {} (d);
			\path [->,line width=0.6pt] (c) edge[bend right=10] node {} (d);
			\path [<-,line width=0.6pt] (c) edge[bend left=10] node {} (d);
		\end{tikzpicture}
		\caption{Its associated digraph $D$}
	\end{subfigure}
	\begin{subfigure}{0.3\textwidth}
		\centering
		\begin{tikzpicture}
			\node[draw, circle, fill=black] (dn) at (0,0) {};
			\node[] (ld) at (-0.3, -0.3){$d^{-}$};
			
			\node[draw, circle, fill=black] (cn) at (0,1){};
			\node[] (lc) at (-0.3, 0.7){$c^{-}$};
			
			\node[draw, circle, fill=black] (bn) at (0,2){};
			\node[] (lb) at (-0.3, 1.7){$b^{-}$};

			\node[draw, circle, fill=black] (an) at (0,3){};
			\node[] (la) at (-0.3, 2.7){$a^{-}$};
			
			\node[draw, circle, fill=black] (dp) at (2,0) {};
			\node[] (ld) at (2.3, -0.3){$d^{+}$};
			
			\node[draw, circle, fill=black] (cp) at (2,1){};
			\node[] (lc) at (2.3, 0.7){$c^{+}$};
			
			\node[draw, circle, fill=black] (bp) at (2,2){};
			\node[] (lb) at (2.3, 1.7){$b^{+}$};

			\node[draw, circle, fill=black] (ap) at (2,3){};
			\node[] (la) at (2.3, 2.7){$a^{+}$};
			
			\path [-, red, line width=1.0pt] (an) edge node {} (bp);
			\path [-,line width=0.6pt] (an) edge node {} (cp);
			\path [-,line width=0.6pt] (an) edge node {} (dp);
			
			\path [-, red, line width=1.0pt] (bn) edge node {} (ap);
			\path [-,line width=0.6pt] (bn) edge node {} (cp);
			\path [-,line width=0.6pt] (bn) edge node {} (dp);
			
			\path [-,line width=0.6pt] (cn) edge node {} (ap);
			\path [-,line width=0.6pt] (cn) edge node {} (bp);
			\path [-, red, line width=1.0pt] (cn) edge node {} (dp);
			
			\path [-,line width=0.6pt] (dn) edge node {} (ap);
			\path [-,line width=0.6pt] (dn) edge node {} (bp);
			\path [-, red, line width=1.0pt] (dn) edge node {} (cp);
		\end{tikzpicture}
		\caption{Associated bipartite graph $B$, with a perfect matching}
	\end{subfigure}
	\begin{subfigure}{0.3\textwidth}
		\centering
		\begin{tikzpicture}
			\node[draw, circle, fill=black] (dn) at (0,0) {};
			\node[] (ld) at (-0.3, -0.3){$d^{-}$};
			
			\node[draw, circle, fill=black] (cn) at (0,1){};
			\node[] (lc) at (-0.3, 0.7){$c^{-}$};
			
			\node[draw, circle, fill=black] (bn) at (0,2){};
			\node[] (lb) at (-0.3, 1.7){$b^{-}$};

			\node[draw, circle, fill=black] (an) at (0,3){};
			\node[] (la) at (-0.3, 2.7){$a^{-}$};
			
			\node[draw, circle, fill=black] (dp) at (2,0) {};
			\node[] (ld) at (2.3, -0.3){$d^{+}$};
			
			\node[draw, circle, fill=black] (cp) at (2,1){};
			\node[] (lc) at (2.3, 0.7){$c^{+}$};
			
			\node[draw, circle, fill=black] (bp) at (2,2){};
			\node[] (lb) at (2.3, 1.7){$b^{+}$};

			\node[draw, circle, fill=black] (ap) at (2,3){};
			\node[] (la) at (2.3, 2.7){$a^{+}$};
			
			\path [-, blue, line width=1.0pt] (an) edge node {} (cp);
			\path [-,line width=0.6pt] (an) edge node {} (dp);
			
			\path [-,line width=0.6pt] (bn) edge node {} (cp);
			\path [-, blue, line width=1.0pt] (bn) edge node {} (dp);
			
			\path [-,line width=0.6pt] (cn) edge node {} (ap);
			\path [-, blue, line width=1.0pt] (cn) edge node {} (bp);
			
			\path [-, blue, line width=1.0pt] (dn) edge node {} (ap);
			\path [-,line width=0.6pt] (dn) edge node {} (bp);
		\end{tikzpicture}
		\caption{A second perfect matching}
	\end{subfigure}
		\begin{subfigure}{0.3\textwidth}
		\centering
		\begin{tikzpicture}
			\node[draw, circle, fill=black] (dn) at (0,0) {};
			\node[] (ld) at (-0.3, -0.3){$d^{-}$};
			
			\node[draw, circle, fill=black] (cn) at (0,1){};
			\node[] (lc) at (-0.3, 0.7){$c^{-}$};
			
			\node[draw, circle, fill=black] (bn) at (0,2){};
			\node[] (lb) at (-0.3, 1.7){$b^{-}$};

			\node[draw, circle, fill=black] (an) at (0,3){};
			\node[] (la) at (-0.3, 2.7){$a^{-}$};
			
			\node[draw, circle, fill=black] (dp) at (2,0) {};
			\node[] (ld) at (2.3, -0.3){$d^{+}$};
			
			\node[draw, circle, fill=black] (cp) at (2,1){};
			\node[] (lc) at (2.3, 0.7){$c^{+}$};
			
			\node[draw, circle, fill=black] (bp) at (2,2){};
			\node[] (lb) at (2.3, 1.7){$b^{+}$};

			\node[draw, circle, fill=black] (ap) at (2,3){};
			\node[] (la) at (2.3, 2.7){$a^{+}$};
			
			\path [-, ForestGreen, line width=1.0pt] (an) edge node {} (dp);
			
			\path [-, ForestGreen, line width=1.0pt] (bn) edge node {} (cp);
			
			\path [-, ForestGreen, line width=1.0pt] (cn) edge node {} (ap);
			
			\path [-, ForestGreen, line width=1.0pt] (dn) edge node {} (bp);
		\end{tikzpicture}
		\caption{Third perfect matching}
	\end{subfigure}
	\begin{subfigure}{0.3\textwidth}
		\centering
		\begin{tikzpicture}[>=stealth', every path/.style={line width=0.6pt}]
			\node[draw, circle, fill=black] (c) at (0,0) {};
			\node[] (lc) at (-0.3, -0.3){$c$};
			
			\node[draw, circle, fill=black] (d) at (2,0){};
			\node[] (ld) at (2.3, -0.3){$d$};
			
			\node[draw, circle, fill=black] (a) at (0,2){};
			\node[] (la) at (-0.3, 2.3){$a$};

			\node[draw, circle, fill=black] (b) at (2,2){};
			\node[] (lb) at (2.3, 2.3){$b$};
			
			\path [->, red] (a) edge[bend right=10] node {} (b);
			\path [<-, red] (a) edge[bend left=10] node {} (b);
			\path [->, blue] (a) edge[bend right=10] node {} (c);
			\path [<-, ForestGreen] (a) edge[bend left=10] node {} (c);
			\path [->, ForestGreen] (a) edge[bend right=10] node {} (d);
			\path [<-, blue] (a) edge[bend left=10] node {} (d);
			\path [->, ForestGreen] (b) edge[bend right=10] node {} (c);
			\path [<-, blue] (b) edge[bend left=10] node {} (c);
			\path [->, blue] (b) edge[bend right=10] node {} (d);
			\path [<-, ForestGreen] (b) edge[bend left=10] node {} (d);
			\path [->,red] (c) edge[bend right=10] node {} (d);
			\path [<-,red] (c) edge[bend left=10] node {} (d);
		\end{tikzpicture}
		\caption{A proper edge coloring of $D$ based on the matchings}
		\label{fig:proofencodingF}
	\end{subfigure}
\caption{Illustration for the proof of Lemma~\ref{lemma:encoding} on $K_4$. Note that the encoding resulting of subfigure (f) corresponds to the one presented in Figure~\ref{fig:encodingC}.}
\label{fig:proofencoding}
\end{figure}

\begin{proof}

Let $G = (V,E)$ be a $3$-regular graph, we start by considering the digraph $D = (V', E')$ associated to $G$ where $V' = V$ and $E'$ contains the pairs $(u,v)$ and $(v,u)$ if there was an edge between nodes $u$ and $v$ in $G$. We claim that if there exists a way to color edges in $D$ with $\{\texttt{red, blue, green}\}$ such that every vertex has (i) incoming edges of each different color, and (ii) outgoing edges of each different color, then we can produce a proper encoding based on that. Here's how to do it: if vertex $u$ has a red outgoing edge to $v$, a green outgoing edge to $w$ and a blue outgoing edge to $x$, then we can encode it as $\texttt{uvwx}$. Note that the color of an  edge $u \to v$ determines in which position is $v$ going to appear in the encoding of $u$, and therefore condition (i) over $v$ ensures that the label of $v$ appears in every position, while condition (ii) over $u$ ensures that no more than one vertex is assigned position $p$ on the encoding of $u$. 

In order to find such an edge coloring, we create the undirected bipartite graph $B = (V'', E'')$, where for every vertex $v \in V$, we put two vertices $v^{+}$ and $v^{-}$ in $V''$, and for every edge~$(u,v)$ in $E'$ we put the edges $(u^{+}, v^{-})$ and $(u^{-}, v^{+})$. The partition of $B$ is then, of course, the set of vertices $(\cdot)^{+}$ and the set of vertices $(\cdot)^{-}$. Note that $B$ is also a $3$-regular graph, as every vertex $v$ with neighbors $u, w$ and $x$ in the original graph, it is associated vertex $u^{+}$ is connected with $v^{-}, w^{-}$ and $x^{-}$ in $B$, and $u^{-}$ will be connected to $u^{+}, w^{+}$ and~$x^{+}$.

As a direct consequence of Hall's Marriage Theorem \cite{Hall1935}, every regular bipartite graph has a perfect matching.  Such a perfect matching can be computed in polynomial time using for example the Hopcroft-Karp algorithm \cite{Hopcroft1973}. Let $M$ be the set of edges of a perfect matching computed that way. We can color every edge in $M$ with red. Now, if we remove from $E''$ all the edges of $M$, the bipartite graph is 2-regular, as each node has lost exactly one neighbor. By using Hall's theorem again, we can get a new perfect matching $M'$, whose edges we color with blue. If we now remove all the edges of $M''$, we get a $1$-regular graph, which is itself a perfect matching, and whose edges we color with green. This is enough to get the required coloring in the graph $D$, just by coloring every edge~$(u,v)$ with the same color of the edge~$(u^{-}, v^{+})$.
\end{proof}

The final step of the proof of Theorem~\ref{thm:EvilHangmanIsNPHard} consists in proving that $(G,d)$ is a positive instance of \textsc{3-Regular Dominating Set} if and only if $(L, d)$ is a negative instance of \textsc{Evil Hangman}. Lemma~\ref{lemma:Pos3DomImpliesPosEHangman} proves the forward direction of the statement:

\begin{lemma} \label{lemma:Pos3DomImpliesPosEHangman}
Let $G$ be a $3$-regular graph, and $L$ be the language built from $G$ as described in Lemma~\ref{lemma:encoding}, and let $d$ be an arbitrary integer.  If $(G,d)$ is a positive instance of \textsc{3-Regular Dominating Set}, then $(L, d)$ is a negative instance of \textsc{Evil Hangman}.
\end{lemma}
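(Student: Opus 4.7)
My plan is to use the dominating set $D \subseteq V$ of size $d' \leq d$ guaranteed by the hypothesis to construct an explicit guesser strategy that finishes the game with at most $d-1$ failed queries against any setter, which immediately implies that $(L,d)$ is negative. Writing $D = \{v_1, \ldots, v_{d'}\}$ and denoting by $\ell_v$ the alphabet symbol associated to $v$ in the proper encoding $L$ from Lemma~\ref{lemma:encoding}, the strategy has two phases: in Phase~1 the guesser queries the labels $\ell_{v_1}, \ldots, \ell_{v_{d'}}$ one at a time until the current filter becomes a singleton $\{w^*\}$; in Phase~2 she queries every letter of $w^*$ not yet asked.

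The first key step is to show that, in a proper encoding, any successful setter answer collapses the filter to a singleton. Each word of $L$ contains four pairwise distinct symbols (the label of its vertex and those of its three neighbors), so any non-empty answer $B$ to a query $\ell$ must satisfy $|B|=1$; and by the properness property exactly one word of $L$ has $\ell$ at the revealed position, forcing the new filter to contain exactly that word. The second key step is to show that in Phase~1 the setter cannot answer $B=\emptyset$ to all $d'$ queries. Otherwise the filter would consist of the words using none of the labels of $D$, but since $D$ is dominating, every word $w_v\in L$ contains at least one $\ell_{v_i}$ for $v_i\in D$ (either because $v\in D$ or because $v$ has some neighbor $v_i\in D$), so this filter would be empty, contradicting the setter's consistency requirement.

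Combining the two observations, Phase~1 ends with a singleton filter $\{w^*\}$ after at most $d'-1$ failed queries. In Phase~2 the guesser asks only for letters of $w^*$; answering any of these queries with $B=\emptyset$ would remove $w^*$ and leave the filter empty, so every Phase~2 query necessarily succeeds. Summing over the two phases gives a total of at most $d'-1 \leq d-1 < d$ failed queries, so the setter cannot force $d$ failures and $(L,d)$ is a negative instance of \textsc{Evil Hangman}.

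The main subtlety I expect to handle carefully is that Phase~1 can end with a singleton filter in two different ways: triggered by a successful query, or simply because repeated empty answers have already excluded all but one compatible word. Both subcases must be verified, but the two observations above cover them uniformly; no further delicate calculation is required.
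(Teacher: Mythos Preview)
Your proposal is correct and follows essentially the same approach as the paper's own proof: query the labels of the vertices in a dominating set, argue that not all answers can be empty (domination), argue that any positive answer pins down the word uniquely (properness), and conclude that the guesser finishes with at most $d-1$ failures. Your write-up is more explicit than the paper's (you spell out why $|B|=1$, separate the argument into two phases, and flag the early-termination subcase), but the underlying strategy and the two key observations are identical.
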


\begin{proof}
Let's assume $(G,d)$ is a positive instance of the  \textsc{3-Regular Dominating Set} problem. This means that there is a dominating set for $G$ of size at most $d$. Let $\Gamma = = \{ v_1, v_2, \ldots, v_{\gamma(G)} \}$ be such a dominating set. Then, if the guesser makes the sequence of queries $\ell(v_1), \ell(v_2), \ldots, \ell(v_{\gamma(G)})$, where $\ell(v)$ corresponds to the label of vertex $v$, the setter is forced to answer positively at least one of those queries, as otherwise there would be no possible word for her to reveal at the end. Thus far, the guesser has made at most $d-1$ failed queries. As the encoding of the graph $G$ into the language $L$ is proper, a single guess answered positively is enough to uniquely determine the secret word, and therefore the guesser can win the game without making any more failed guesses, implying that the instance $(L, d)$ is negative for \textsc{Evil Hangman}. 
\end{proof}

Lemma~\ref{lemma:PosEHangmanImpliesPos3Dom} proves (by contradition) the reverse direction:

\begin{lemma} \label{lemma:PosEHangmanImpliesPos3Dom}
Let $G$ be a $3$-regular graph, and $L$ be the language built from $G$ as described in Lemma~\ref{lemma:encoding}, and let $d$ be an arbitrary integer.  If $(L, d)$ is a negative instance of \textsc{Evil Hangman}, then  $(G, d)$ is a positive instance of \textsc{3-Regular Dominating Set}.
\end{lemma}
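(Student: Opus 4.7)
The plan is to argue by contradiction. Suppose $(L,d)$ is a negative instance of \textsc{Evil Hangman} while $(G,d)$ is also a negative instance of \textsc{3-Regular Dominating Set}, meaning $\gamma(G) > d$. I will exhibit a setter strategy that forces at least $d$ failed guesses, contradicting the assumption on $(L,d)$.

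The strategy itself is radically simple: the setter answers every guess negatively, i.e., always returns the unchanged mask $\bot^4$. The only nontrivial point is to check that this lazy strategy remains \emph{valid} throughout the first $d$ rounds; that is, after each of the first $d$ negative answers, at least one word of $L$ is still consistent with all the answers given so far.

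The key observation, where properness (Lemma~\ref{lemma:encoding}) does all the work, is the following: for each vertex $v \in V$, the label $\ell(v)$ appears in exactly the $4$ words $\{w_u \mid u \in N[v]\}$, namely $w_v$ itself (with $\ell(v)$ at position~$1$) and the encodings $w_u$ of each of the three neighbors $u$ of $v$ (with $\ell(v)$ at positions $2$, $3$, $4$, one at each, by properness combined with the fact that $\ell(v)$ must appear in $w_u$ whenever $v$ is a neighbor of $u$). Hence, after queries $\ell(v_1), \ldots, \ell(v_k)$ have all been answered negatively, the words of $L$ still consistent are exactly $\{w_u \mid u \notin N[v_1] \cup \cdots \cup N[v_k]\}$, which is nonempty if and only if $\{v_1,\ldots,v_k\}$ is not a dominating set of $G$.

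Since $\gamma(G) > d$, no collection of at most $d$ vertices dominates $G$, so the negative-answer strategy remains valid through all of the first $d$ rounds, forcing $d$ failed guesses and yielding the desired contradiction. The place that requires the most care is the correspondence ``$\ell(v)$ appears in $w_u$ iff $u \in N[v]$'', which would fail in a non-proper encoding where a symbol could occur several times in the same word and thereby collapse the clean bookkeeping between letter queries and closed neighborhoods; this is precisely why the properness guaranteed by Lemma~\ref{lemma:encoding} was set up.
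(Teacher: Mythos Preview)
Your proof is correct and follows essentially the same route as the paper: both argue the contrapositive, use the ``reject every query'' setter strategy, and observe that the word $w_u$ survives the first $d$ negative answers precisely when $u$ is not dominated by the set of queried vertices, which is guaranteed for some $u$ since $\gamma(G)>d$.

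One small overclaim worth flagging: you single out properness as the place ``requiring the most care'', but in fact this direction does \emph{not} need properness at all. The equivalence ``$\ell(v)$ occurs in $w_u$ iff $u\in N[v]$'' already follows from the definition of a vertex encoding (the word $w_u$ lists $u$ and its three neighbors, which are four distinct vertices in a simple graph), so no symbol can repeat within a word and the bookkeeping between negative answers and closed neighborhoods is automatic in any encoding. Properness is genuinely needed only in Lemma~\ref{lemma:Pos3DomImpliesPosEHangman}, where a single \emph{positive} answer must pin down the secret word uniquely.
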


\begin{proof}

We will show the contrapositive statement, namely, that if $(G, d)$ is a negative instance of \textsc{3-Regular Dominating Set} then $(L, d)$ is a positive instance of \textsc{Evil Hangman}.  Let's assume that $(G, d)$ is a negative instance, and thus, $d$ vertices are not enough to dominate the graph. This would mean that for any set $D$ of $d$ vertices, there is at least one vertex $v_D$ which is not dominated by $D$. Consider then that the guesser makes a sequence of $d$ queries, whose associated vertices form the set $D'$. Then, rejecting all those $d$ queries and revealing $w(v_{D'})$ as the secret word is a guaranteed strategy for the setter, meaning that $(L, d)$ is a positive instance of \textsc{Evil Hangman}.

\end{proof}

This concludes the proof of the \conphardness~of \textsc{Evil Hangman} when $L$ is a finite language. 
 We explore its computational complexity for more general types of languages in the next section. 

\section{Generalized Languages}
\label{sec:turing-recogn-dict}

A natural generalization of the \textsc{Evil Hangman} problem is to consider its complexity when played over broader classes of languages, such as Regular, Context Free or Turing computable languages over an alphabet $\Sigma$, projected to words of length $k$ by intersection with $\Sigma^k$.
The lower bound of Theorem~\ref{thm:EvilHangmanIsNPHard} (where the language is finite) can be extended to prove hardness for the cases where the language is defined by a Regular Expression (or, respectively, a Context Free Grammar or a Turing Machine) by observing that a dictionary of $n$ words of length $k$ can be encoded in a Regular Expression (or respectively, a Context Free Grammar or a Turing Machine) of size $(nk)^{O(1)}$, and thus we can construct hard instances for such problems by using the same construction used to prove Theorem~\ref{thm:EvilHangmanIsNPHard}.

In this section we give a result for classes of languages whose associated machines  are strong enough to simulate other machines within their class. Namely, that when the game is played over languages of such classes, the \textsc{Evil Hangman} decision problem is at least as hard as deciding membership of a word in the language.
This implies undecidability for Turing computable languages (given as Turing Machines, abbreviated as TM) and \textrm{PSPACE}-completeness for Context Sensitive Languages \cite{Linz2011} (given as their equivalent Linear Bounded Automata \cite{Kuroda1964}, abbreviated as LBA). 
The proofs of hardness are thus based on reductions to the membership problem, stated below.

\csproblem{Membership}{Given an encoding of a machine (or language) $C$ belonging to a class $\mathcal{C}$, and a word $w$, decide whether $w \in L(C)$}

We consider first a restricted class of languages among Turing Computable languages:

\begin{definition}[Universal Simulation Languages] 	\label{def:universalSimulatiom} 
Let $\mathcal{C}$ be a class of machines, we say that $\mathcal{C}$ allows universal simulation if, given a machine $C$, it is possible to construct in polynomial time a machine $C'$ that accepts exactly the language $\{ \alpha, \beta \}$ if $(C, w) \in \textsc{Membership}(\mathcal{C})$ and $\{ \alpha \}$ otherwise.
\end{definition}

We can now state the key lemma used to prove undecidability of \textsc{Evil Hangman}(TM) and \textrm{PSPACE}-completeness for \textsc{Evil Hangman}(LBA).

\begin{lemma}
Let $\mathcal{C}$ be a class of machines (languages) allowing universal simulation. Then, there is a polynomial time reduction problem from $\textsc{Membership}(\mathcal{C})$ to $\textsc{Evil Hangman}(\mathcal{C})$, that is, \textsc{Evil Hangman} but over a language $L$ defined by an element of $\mathcal{C}$.
	\label{lemma:membershipTohangman}	
\end{lemma}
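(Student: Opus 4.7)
The plan is to reduce $\textsc{Membership}(\mathcal{C})$ to $\textsc{Evil Hangman}(\mathcal{C})$ by exploiting the two-word language guaranteed by universal simulation. Given an instance $(C, w)$, I would invoke Definition~\ref{def:universalSimulatiom} to construct, in polynomial time, a machine $C' \in \mathcal{C}$ whose language is $\{\alpha, \beta\}$ if $(C, w) \in \textsc{Membership}(\mathcal{C})$ and $\{\alpha\}$ otherwise. I would take $\alpha$ and $\beta$ to be two words of the same length $k$ over \emph{disjoint} letter sets, the canonical choice being $\alpha = a^k$ and $\beta = b^k$ for two fresh symbols $a \neq b$. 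The reduction then outputs the \textsc{Evil Hangman} instance $(C', d = 1)$ played over words of length $k$.

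For correctness I would argue two directions. If $L(C') = \{a^k\}$, the perfect guesser already knows the secret word and wins by querying the single letter $a$, incurring $0$ failed guesses; hence $(C', 1)$ is a negative instance of \textsc{Evil Hangman}. Conversely, if $L(C') = \{a^k, b^k\}$, I claim the setter forces at least one failed guess: whichever symbol $s$ the guesser issues on her first query, $s$ occurs in at most one of the two candidate words, so the setter commits to the word that does not contain $s$ and replies with the unchanged mask $\blank^k$, making that query fail. In both cases the game ends quickly but with the needed separation (zero versus at least one failed guess), so $(C, w) \in \textsc{Membership}(\mathcal{C})$ iff $(C', 1) \in \textsc{Evil Hangman}(\mathcal{C})$.

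The main obstacle --- and the reason I insist on the specific shape $\alpha = a^k, \beta = b^k$ --- is that for arbitrary distinct $\alpha$ and $\beta$ the setter cannot necessarily force any failed guess. Already for $\alpha = \mathtt{ab}$ and $\beta = \mathtt{ba}$, if the guesser queries $a$ then both candidates contain $a$, so the setter must reveal it at position $1$ or position $2$, and a single extra query then identifies the word with no failure. Restricting $\alpha$ and $\beta$ to disjoint letter sets removes this trap, because the setter is always guaranteed the option of rejecting any proposed symbol. I would also need to verify that the universal simulation construction can be tailored to produce precisely these $\alpha, \beta$ (a routine padding modification inside $C'$) and that the resulting $C'$ remains a member of $\mathcal{C}$, which is straightforward for the Turing Machines and Linear Bounded Automata to which Lemma~\ref{lemma:membershipTohangman} is subsequently applied.
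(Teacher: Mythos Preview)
Your proposal is correct and follows essentially the same approach as the paper. The paper's proof is terser: it simply takes $k=1$, so that $\alpha$ and $\beta$ are single distinct symbols, which automatically gives the ``disjoint letter sets'' property you worked to ensure; your version with $\alpha=a^k$, $\beta=b^k$ is a harmless generalization of the same idea, and your extra paragraph explaining why arbitrary $\alpha,\beta$ would not suffice is a nice clarification that the paper omits.
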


\begin{proof}
Consider and an arbitrary element $C \in \mathcal{C}$. Because of the universal simulation property, we can construct a machine $C'$ with the behavior specified in Definition~\ref{def:universalSimulatiom}. Now consider $k = 1$ and the instance $(L(C'), d = 1)$. We can see that is a positive instance of \textsc{Evil Hangman} if and only if $C$ accepts on $\beta$, as if the $C$ accepts $\beta$ the dictionary has size $2$ and can force a failed guess, but if $C$ rejects $\beta$, the dictionary will have size $1$ and thus it is not possible to force a failed guess.
\end{proof}

We have now the machinery required to easily prove the following two theorems, that define the computability and complexity of \textsc{Evil Hangman} over Context Sensitive languages and Turing Computable languages.
A reduction  from \textsc{Membership}$(\textsc{PSPACE})$ yields the PSPACE completeness:
\begin{theorem}
\label{thm:hangmanLBAIsPSPACEC}
	\textsc{Evil Hangman} is $\mathrm{PSPACE}$-complete  when the language $L$ is the language defined by an arbitrary Linear Bounded Automaton $M$.
\end{theorem}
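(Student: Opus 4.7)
The plan is to establish $\mathrm{PSPACE}$-completeness via two directions: $\mathrm{PSPACE}$-hardness through a reduction from $\textsc{Membership}(\mathrm{LBA})$ using Lemma~\ref{lemma:membershipTohangman}, and $\mathrm{PSPACE}$-membership through a polynomial-space recursive evaluation of the minimax game tree of Equation~\ref{eq:recursive}. The main obstacle is the hardness direction: the proof of Lemma~\ref{lemma:membershipTohangman} picks $k=1$, which is fundamentally at odds with the linear-space restriction of LBAs, since to simulate $M$ on $w$ the constructed LBA $M'$ needs space proportional to $|w|$, whereas on a one-letter input $M'$ would only have constant space.

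For the hardness direction, I would show that the class of LBAs satisfies the universal simulation property of Definition~\ref{def:universalSimulatiom} by matching the input length to $|w|$. Given an LBA $M$ and an input $w$ of length $n$, I construct an LBA $M'$ over a two-letter alphabet $\{a,b\}$ with $\alpha = a^n$ and $\beta = b^n$: on input $x$ of length $n$, $M'$ accepts immediately if $x = \alpha$, and if $x = \beta$ it overwrites its tape with the string $w$ (hardcoded into its transition function) and then simulates $M$ on $w$, accepting iff $M$ does. Since $|x| = n = |w|$, the LBA $M'$ has $\Theta(n)$ tape available, which suffices to simulate $M$ on $w$, using a standard tape-compression argument to absorb $M$'s constant tape factor. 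The construction is polynomial in $|M|$ and $|w|$. The argument of Lemma~\ref{lemma:membershipTohangman} then applies verbatim with $k=n$ and $d=1$: since $\alpha$ and $\beta$ use disjoint alphabets, any single guesser query can be rejected by the setter when $L(M') = \{\alpha,\beta\}$, while no failed guess can be forced when $L(M') = \{\alpha\}$. Combining this with the classical fact that $\textsc{Membership}(\mathrm{LBA})$ is $\mathrm{PSPACE}$-complete yields $\mathrm{PSPACE}$-hardness.

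For the upper bound, I would describe a polynomial-space recursive minimax algorithm evaluating Equation~\ref{eq:recursive}. The game tree has depth at most $\sigma = |\Sigma|$, since every turn removes one letter from the alphabet, and at each node the state consists only of the current mask $M$, the remaining alphabet, and the remaining failed-guess budget, totalling polynomially many bits. The recursion enumerates, for each guesser query $s \in \Sigma$, each setter response $B \subseteq B_M$, scanning the $2^{|B_M|}$ subsets sequentially using a counter of at most $k$ bits. The only nontrivial subroutine is the feasibility test $\filter(L, M \oplus s^B) \neq \varnothing$, which amounts to asking whether the LBA accepts some completion of the masked word; this is decidable in polynomial space by iterating over all completions in $\Sigma^k$ and invoking, for each, the $\mathrm{PSPACE}$ decision procedure for LBA membership, reusing space across iterations. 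Since the recursion depth is $O(\sigma)$ and each stack frame uses polynomial space, the overall algorithm runs in polynomial space, completing the proof.
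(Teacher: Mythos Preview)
Your proposal is correct and follows the same high-level strategy as the paper: hardness via a reduction from \textsc{Membership}(LBA) through Lemma~\ref{lemma:membershipTohangman}, and membership via a polynomial-space evaluation of Equation~\ref{eq:recursive}.

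The one substantive difference is how the universal simulation property is established for LBAs. The paper simply cites a result of Feldman et al.\ on the existence of universal LBAs and then invokes Lemma~\ref{lemma:membershipTohangman} as a black box; it does not explicitly revisit the $k=1$ choice made in that lemma's proof. You instead give a direct construction---hardcoding $w$ into $M'$, taking $\alpha=a^{|w|}$, $\beta=b^{|w|}$, and using tape compression---which makes the space accounting transparent and sidesteps the $k=1$ tension you correctly flagged. Your route is more self-contained and arguably cleaner; the paper's route is shorter but leans on an external reference and leaves the length issue implicit. Your upper-bound argument is likewise more fleshed out than the paper's one-line ``naive simulation of Equation~(\ref{eq:recursive}),'' but the underlying idea is identical.
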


\begin{proof}
\textsc{Membership}(LBA) is \textrm{PSPACE}-complete \cite{Karp1972}. This implies membership in \textrm{PSPACE} by a naive simulation of Equation (2). To prove hardness, we can use a result by Feldman et al. \cite{Feldman1973ACO} that states that for every $n$, there is a universal LBA $M_n$ for the class of LBAs using at most $n$ tape symbols, and thus LBAs hold the property of universal simulation (Definition~\ref{def:universalSimulatiom}). As \textsc{Membership}(LBA) is in particular \textrm{PSPACE}-hard, the reduction implies as well hardness for our problem.
\end{proof}

A similar reduction from \textsc{Membership}, but this time from $TM$, yields the undecidability result:

\begin{theorem}
	\label{thm:hangmanTMIsUndecidable}
\textsc{Evil Hangman} is undecidable  when the language $L$ is the language defined by an arbitrary Turing machine $M$.
\end{theorem}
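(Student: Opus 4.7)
The plan is to mirror almost verbatim the argument used for \textsc{Evil Hangman}(LBA) in Theorem~\ref{thm:hangmanLBAIsPSPACEC}, only replacing the underlying complexity-class-theoretic input with an undecidable one. The core machinery is already available: Lemma~\ref{lemma:membershipTohangman} gives a polynomial-time reduction from $\textsc{Membership}(\mathcal{C})$ to $\textsc{Evil Hangman}(\mathcal{C})$ whenever $\mathcal{C}$ is a class of machines allowing universal simulation in the sense of Definition~\ref{def:universalSimulatiom}. So the entire proof reduces to verifying two facts for the class of Turing machines.

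First, I would verify that the class of Turing machines allows universal simulation. Given an arbitrary TM $C$ and input $w$, one constructs a new TM $C'$ that, on input $\alpha$ always accepts, and on input $\beta$ simulates $C$ on $w$ and accepts iff $C$ accepts. (On every other input $C'$ rejects immediately.) This $C'$ can evidently be built in polynomial time from the description of $C$ and $w$, and by construction $L(C') = \{\alpha,\beta\}$ when $(C,w) \in \textsc{Membership}(\mathrm{TM})$ and $L(C') = \{\alpha\}$ otherwise. Hence TMs satisfy Definition~\ref{def:universalSimulatiom}.

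Second, I would invoke the classical undecidability of $\textsc{Membership}(\mathrm{TM})$ (equivalently, the Halting Problem) to conclude. By Lemma~\ref{lemma:membershipTohangman}, a decision procedure for $\textsc{Evil Hangman}(\mathrm{TM})$ would yield, via the polynomial-time reduction, a decision procedure for $\textsc{Membership}(\mathrm{TM})$, contradicting undecidability. Therefore $\textsc{Evil Hangman}(\mathrm{TM})$ is undecidable, establishing the theorem.

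I do not expect any real obstacle: the genuine work was done in Lemma~\ref{lemma:membershipTohangman}, which abstracts precisely the structural property needed, and the universal-simulation property for TMs is essentially folklore (a standard consequence of the existence of a universal Turing machine). The only thing to be careful about is keeping $k=1$ and $d=1$ in the produced Evil Hangman instance, as in the proof of Lemma~\ref{lemma:membershipTohangman}, so that the reduction is uniform and the Evil Hangman instance really does have a winning strategy for the setter if and only if the dictionary contains both $\alpha$ and $\beta$.
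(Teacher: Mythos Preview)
Your proposal is correct and follows essentially the same route as the paper: invoke Lemma~\ref{lemma:membershipTohangman}, observe that Turing machines satisfy the universal-simulation property of Definition~\ref{def:universalSimulatiom}, and conclude from the undecidability of \textsc{Membership}(TM). The paper's proof is terser (it simply asserts that TMs ``trivially hold'' universal simulation and appeals to Rice's Theorem for the undecidability of membership), whereas you spell out the construction of $C'$ explicitly and cite the Halting Problem; both are fine, and your explicit construction is arguably cleaner.
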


\begin{proof} 
We reduce from $\textsc{Membership}(TM)$, directly from Lemma~\ref{lemma:membershipTohangman}, as Turing Machines trivially hold the property of universal simulation from Definition~\ref{def:universalSimulatiom}.  The fact that membership is undecidable for Turing Machines (Rice's Theorem) concludes the proof.
\end{proof}

This concludes our results about the computational complexity of optimal strategies for the \textsc{Evil Hangman} problem.
In the next section, we summarize our results and outline some remaining open questions.

\section{Discussion}
\label{sec:openquestions}

On one hand, the greedy strategy for \textsc{Evil Hangman} (the one which is commonly given as a programming assignment) can perform arbitrarily bad on certain languages (Theorem~\ref{thm:greedycheaterIsNotCCompetitive}); on the other hand finding an optimal strategy for a given language is \conphard~(Theorem~\ref{thm:EvilHangmanIsNPHard}), and thus we cannot expect a polynomial time algorithm for it unless $\mathrm{P} = \mathrm{NP}$.
Note that the \conphardness~from the setter's perspective implies \nphardness~from the guesser's perspective: Theorem~\ref{thm:EvilHangmanIsNPHard} is equivalent to the \nphardness~of deciding whether a guesser can always win the game without making $d$ failed queries.
Even worse, the optimality of an answer by the evil setter is $\mathrm{PSPACE}$-complete for languages described by Context Sensitive Grammars (Theorem~\ref{thm:hangmanLBAIsPSPACEC}) and undecidable for Turing Computable languages (Theorem~\ref{thm:hangmanTMIsUndecidable}).

Although hard in arbitrary languages, the game of \textsc{Hangman} is traditionally played on natural languages, where alphabets are pretty small, and words are pretty short. Hence it is worth noticing that Equation~\ref{eq:recursive} (given on page~\pageref{eq:recursive}) yields a Fixed Parameter Tractable (\textrm{FPT}) algorithm~\cite{Downey1995,2006-BOOK-ParameterizedComplexityTheory-FlumGrohe} when parameterized over $\ell = \sigma + k$ (size of the alphabet + word length). In particular, it can be implemented in time within $n \cdot 2^{O(\ell)}$, where $n = |L|$.
First, note that the recursive formula goes over all the possible $\sigma^k$ masks, all possible symbols, and all the $2^{\sigma}$ possible subsets of the alphabet. This last term can be immediately optimized by considering only the masks that are present in the dictionary, which are no more than $n2^k$. Note that by considering only those masks, the remaining language (in the subscript of $OPT$ in Equation~\ref{eq:recursive}) is kept implicit.
	Therefore, the total number of cells is bound by $\sigma n2^\sigma 2^k \in n 2^{O(\ell)}$. At every cell we have to choose between at most $2^k$ sub-masks of $M$ and $\sigma$ symbols, and compute $f$ which is done in time within $O(k)$. Thus, the total computational work per cell is within $2^{O(\ell)}$. Multiplying this by the amount of cells gives us the desired result.
It is an open problem whether \textsc{Evil Hangman} becomes \textrm{FPT} when parameterized by $d$, the number of failed guesses allowed. The reduction presented in Lemma \ref{lemma:PosEHangmanImpliesPos3Dom} constitutes an \textrm{FPT} reduction from \MDS\ on $3$-regular graphs. It is well known that \MDS\ on general graphs is complete for the class $W[2]$ (Downey et al.~\cite{Downey1995,2006-BOOK-ParameterizedComplexityTheory-FlumGrohe}). However, $k$-regular graphs are $K_{k+1,k+1}$ free, and thus the result of Telle et al. \cite{Telle2012} implies \MDS\ to be \textrm{FPT} when parameterized by the size of the set. This of course does not imply that \textsc{Evil Hangman} is \textrm{FPT} under such a parameterization: only that we cannot derive fixed parameter intractability from the reduction to dominating set in $3$-regular graphs presented in Lemmas \ref{lemma:Pos3DomImpliesPosEHangman}
and \ref{lemma:PosEHangmanImpliesPos3Dom}.


\bibliography{biblio}

\begin{thebibliography}{10}

\bibitem{Corneil1991}
Derek~G. Corneil and Lorna~K. Stewart.
\newblock Dominating sets in perfect graphs.
\newblock In {\em Topics on Domination}, pages 145--164. Elsevier, 1991.
\newblock Available from: \url{https://doi.org/10.1016/s0167-5060(08)71046-x}.

\bibitem{Downey1995}
Rod~G. Downey and Michael~R. Fellows.
\newblock Fixed-parameter tractability and completeness {II}: On completeness
  for {W[1]}.
\newblock {\em Theoretical Computer Science}, 141(1-2):109--131, April 1995.
\newblock Available from: \url{https://doi.org/10.1016/0304-3975(94)00097-3}.

\bibitem{Feldman1973ACO}
Eliot~D. Feldman and James~C. Owings.
\newblock A class of universal linear bounded automata.
\newblock {\em Inf. Sci.}, 6:187--190, 1973.

\bibitem{2006-BOOK-ParameterizedComplexityTheory-FlumGrohe}
J\"org Flum and Martin Grohe.
\newblock {\em Parameterized Complexity Theory (Texts in Theoretical Computer
  Science. An EATCS Series)}.
\newblock Springer-Verlag New York, Inc., Secaucus, NJ, USA, 2006.

\bibitem{Hall1935}
P.~Hall.
\newblock On representatives of subsets.
\newblock {\em Journal of the London Mathematical Society}, s1-10(1):26--30,
  January 1935.
\newblock Available from: \url{https://doi.org/10.1112/jlms/s1-10.37.26}.

\bibitem{Hopcroft1973}
John~E. Hopcroft and Richard~M. Karp.
\newblock An {\textdollar}n{\^{}}$\lbrace$5/2$\rbrace$ {\textdollar} algorithm
  for maximum matchings in bipartite graphs.
\newblock {\em {SIAM} Journal on Computing}, 2(4):225--231, December 1973.
\newblock Available from: \url{https://doi.org/10.1137/0202019}.

\bibitem{Karp1972}
Richard~M. Karp.
\newblock Reducibility among combinatorial problems.
\newblock In {\em Complexity of Computer Computations}, pages 85--103. Springer
  {US}, 1972.
\newblock Available from: \url{https://doi.org/10.1007/978-1-4684-2001-2_9}.

\bibitem{kikuno1980np}
Tohru Kikuno, Noriyoshi Yoshida, and Yoshiaki Kakuda.
\newblock The np-completeness of the dominating set problem in cubic planer
  graphs.
\newblock {\em IEICE TRANSACTIONS (1976-1990)}, 63(6):443--444, 1980.

\bibitem{Kuroda1964}
S.-Y. Kuroda.
\newblock Classes of languages and linear-bounded automata.
\newblock {\em Information and Control}, 7(2):207--223, June 1964.
\newblock Available from: \url{https://doi.org/10.1016/s0019-9958(64)90120-2}.

\bibitem{Linz2011}
Peter Linz.
\newblock {\em An Introduction to Formal Languages and Automata, Fifth
  Edition}.
\newblock Jones and Bartlett Publishers, Inc., USA, 5th edition, 2011.

\bibitem{Parlante2011}
Nick Parlante, Julie Zelenski, Keith Schwarz, Dave Feinberg, Michelle Craig,
  Stuart Hansen, Michael Scott, and David~J. Malan.
\newblock Nifty assignments.
\newblock In {\em Proceedings of the 42nd {ACM} technical symposium on
  {Computer Science Education} - {SIGCSE} {\textquotesingle}11}. {ACM} Press,
  2011.
\newblock Available from: \url{https://doi.org/10.1145/1953163.1953305}.

\bibitem{Telle2012}
Jan~Arne Telle and Yngve Villanger.
\newblock {FPT} algorithms for domination in biclique-free graphs.
\newblock In {\em Algorithms {\textendash} {ESA} 2012}, pages 802--812.
  Springer Berlin Heidelberg, 2012.
\newblock Available from: \url{https://doi.org/10.1007/978-3-642-33090-2_69}.

\bibitem{IversonBracketNotationWikipedia}
Wikipedia.org.
\newblock Iverson bracket notation.
\newblock \url{https://en.wikipedia.org/wiki/Iverson\_bracket}.
\newblock Last accessed on 2020-02-06.

\end{thebibliography}

\end{document}